\def\BibTeX{{\rm B\kern-.05em{\sc i\kern-.025em b}\kern-.08em
		T\kern-.1667em\lower.7ex\hbox{E}\kern-.125emX}}
\newtheorem{myDef}{\textbf{Definition}}
\newtheorem{lemma}{\textbf{Lemma}}
\newtheorem{prop}{\textbf{Proposition}}
\newenvironment{proof}{{\indent  \it Proof:}}
\begin{document}
	\title{Duration-adaptive Video Highlight Pre-caching for Vehicular Communication Network}
	\author{
		
		\IEEEauthorblockN{Liang Xu\IEEEauthorrefmark{1}, Deshi Li\IEEEauthorrefmark{1}, Kaitao Meng\IEEEauthorrefmark{2}, Mingliu Liu\IEEEauthorrefmark{3} and Shuya Zhu\IEEEauthorrefmark{1}}
		
		\IEEEauthorblockA{\IEEEauthorrefmark{1}Electronic Information School, Wuhan University, Wuhan, China.}
		
		\IEEEauthorblockA{\IEEEauthorrefmark{2}Department of Electronic and Electrical Engineering, University College London, London, UK}
		
		\IEEEauthorblockA{\IEEEauthorrefmark{3}State Grid Hubei Electric Power Research Institute, Wuhan, China}
		
		
		Emails: \IEEEauthorrefmark{1}\{lgxu, dsli, shuyazhu\}@whu.edu.cn, 
		\IEEEauthorrefmark{2}\{kaitao.meng\}@ucl.ac.uk,
		\IEEEauthorrefmark{3}\{liumingliu\}@whu.edu.cn
	}

	\maketitle
	
	
	\vspace{-10mm}
	\begin{abstract}
		Video traffic in vehicular communication networks (VCNs) faces exponential growth. However, different segments of most videos reveal various attractiveness for viewers, and the pre-caching decision is greatly affected by the dynamic service duration that edge nodes can provide services for mobile vehicles driving along a road. In this paper, we propose an efficient video highlight pre-caching scheme in the vehicular communication network, adapting to the service duration. Specifically, a highlight entropy model is devised with the consideration of the segments' popularity and continuity between segments within a period of time, based on which, an optimization problem of video highlight pre-caching is formulated. As this problem is non-convex and lacks a closed-form expression of the objective function, we decouple multiple variables by deriving candidate highlight segmentations of videos through wavelet transform, which can significantly reduce the complexity of highlight pre-caching. Then the problem is solved iteratively by a highlight-direction trimming algorithm, which is proven to be locally optimal. Simulation results based on real-world video datasets demonstrate significant improvement in highlight entropy and jitter compared to benchmark schemes.
	\end{abstract}
	
	\begin{IEEEkeywords}
		Video pre-caching, duration-adaptive, wavelet transform, 6G, vehicular communication network.
	\end{IEEEkeywords}
	
	\section{Introduction}
	The emerging technology of the 6G wireless system has been introduced to intelligent	transportation, which would accelerate the development of autonomous driving \cite{bDT}, and the promising application of autonomous vehicles gives rise to the prosperity of content services \cite{b40}. In particular, video services, such as short video, video-on-demand (VoD), etc., will account for an extremely high ratio of future content traffic \cite{b10}. Moreover, driven by the new video applications, such as augmented reality (AR) navigation, virtual reality (VR), and 4K/8K video delivery, video traffic in vehicular communication networks (VCNs) faces exponential growth \cite{b46}. In this regard, mobile edge caching is proposed to sink storage resources to close vehicular nodes and cache videos required by users to edge nodes (e.g., roadside units (RSUs)), which can significantly reduce the network backhaul traffic and latency. However, due to the limited cache capacity of RSUs and the high mobility of vehicles, video pre-caching is still a challenging problem in vehicular communication networks \cite{b1}.\par
	
	In the literature, there are some related works devoted to proactive caching in RSUs in vehicular communication networks. To make intelligent prefetching decisions, a multi-tier caching mechanism assisted by vehicle mobility prediction was proposed in \cite{b34}. Furthermore, to ensure in-order delivery of video chunks adapted to the mobility characteristics of connected cars, the authors in \cite{b41} proposed a roadside prefetching in RSUs that optimally caches content chunks required most at edge nodes. However, when a user on road requests a video service, the service duration available for the user to watch the video varies, depending on the distance and speed between their departure and termination locations. Most of the time the service duration is too short to finish watching a movie. Furthermore, users often skip certain segments while watching some videos \cite{bperchunkcaching}, and pre-caching whole-file may consume extra storage resources. The statistical analysis conducted in \cite{chunkcharacterize} has revealed that, on average, only $60$ percent of each video file is actually watched. Hence, conventional whole-file pre-caching strategies will result in a	huge waste of resources. It is worth noting that the caching performance is greatly affected by user behaviors of browsing videos/documents, i.e., skipping, switching, dragging, etc., which is seldom considered in the literature. We provide a summary of user browsing behaviors and the corresponding video segment pre-caching strategies when watching different types of videos in Table \ref{Table1}. Indeed, each video segment varies with the popularity that characterizes its attractiveness, and users tend to jump directly to segments with higher popularity by dragging or dropping the progress bar. Additionally, these popular segments often receive a higher number of bullet-screen comments from users. Hence, investigating a more realistic video pre-caching according to video segments' popularity can not only reduce resources but also improve cache efficiency.\par
	
	In general, the popularity value of video segments exhibits a positive association with metrics such as playout times, the number of bullet-screen comments, and ratings provided by the public or professionals \cite{meng2022sensing}. Recently, there has been an increase in the generation and accessibility of popularity value for video segments by various content providers and web pages. For instance, the popularity value of YouTube segments can be inferred from their replay frequency, as exemplified by the utilization of the video activity graph employed by YouTube \cite{VCG}. However, only caching the popular segment may result in great viewing jitter when there is a significant incoherent plot between two watching segments, as it ignores another important performance indicator, i.e., the continuity between adjacent segments. Due to the limited storage and communication resources of RSUs, it is hard to balance the popularity of segments and the continuity between segments within a period of time. Furthermore, the pre-caching performance is greatly affected by the dynamic service duration of vehicles on road, especially for dynamic driving vehicles. Therefore, there is an urgent need to investigate video segment pre-caching that takes into account both video segments' popularity and continuity, adapting to the service duration on road, which means the composition of each segment is dynamically determined according to the user's request or the service duration.\par
	\begin{table*}[t] 
		\centering
		\vspace{-4mm}
		\caption{Browsing Behaviors of Users Watching Different Types of Videos.}
		\vspace{-3mm}
		\label{Table1}
		\begin{tabular}{ | c | c | c | c | c |}
			\hline
			\bf{Type} & \bf{Applications} & \bf{\makecell[c]{Scanning/Browsing behaviors}} & \bf{\makecell[c]{Precached segments}} &
			\bf{\makecell[c]{Pre-caching strategies}}\\ 
			\hline
			\multirow{2}{*}{\makecell[c]{\\ VoD}} & \multirow{2}{*}{\makecell[c]{ Mi TV, IQiyi, \\ YuTube, \\ Netflix}} & \makecell[c]{Fasting forward at N times speed} & \makecell[c]{Chunk after a certain interval} & \makecell[c]{N-Speed playback} \\ \cline{3-5} 
			&{} &\makecell[c]{Skipping, playbacking, \\ sending bullet-screen comments } & \makecell[c]{Captivating segments} & \makecell[c]{Selecting segments with po\\-pularity as high as possible} \\ \hline
			\multirow{2}{*}{\makecell[c]{Short \\ video} } &  \multirow{2}{*}{\makecell[c]{Tik Tok}} &
			\multirow{2}{*}{\makecell[c]{Switching, dragging, dropping, \\ sending bullet-screen comments }} & \multirow{2}{*}{\makecell[c]{Captivating segments}} & \multirow{2}{*}{\makecell[c]{Selecting segments with po\\-pularity as high as possible}} \\
			&{} &\makecell[c]{} & \makecell[c]{} & \makecell[c]{}\\
			\hline
		\end{tabular}
		\vspace{-6mm}
	\end{table*} 
	
	To provide efficient video services for vehicular communication networks, we develop a Duration-adaptive Highlight Pre-Caching (DHPC) scheme. Leveraging the popularity of video segments, we design a highlight segments pre-caching method, which dynamically adapts to the varying service duration. Specifically, a video viewing quality evaluation model based on video highlight entropy is devised to integrate popularity and continuity together, based on which, the average highlight entropy of requested videos is maximized. To solve the optimization problem, we first propose a highlight segmentation method by wavelet transform to decouple the optimized variables, then propose a highlight-direction trimming algorithm to compose optimal pre-caching video files in RSUs efficiently. The main contributions of this paper are summarized as follows:
	\begin{itemize}[leftmargin=*]
		\item A duration-adaptive highlight pre-caching scheme according to video segments' popularity is proposed to provide efficient video services for vehicular communication networks, which adapts to dynamic service durations of driving vehicles.
		\item To balance the popularity of segments and continuity between segments, a highlight entropy based quality evaluation model is constructed. Furthermore, we formulate the optimization problem of highlight entropy maximization while improving the quality of the viewing experience.
		\item To solve the problem and reduce the computational complexity, wavelet transformation is conducted to obtain the candidate highlight segments. Then, a highlight-direction trimming algorithm is proposed to quickly obtain optimal pre-caching video files in RSUs.
	\end{itemize}
	The remainder of this article is organized as follows. Section II presents the system model and video viewing quality evaluation model. The DHPC scheme is illustrated in Section III. In Section IV, the proposed methods are evaluated through abundant simulations, and finally, we conclude this article and direct our future work to Section V. 
	
	\section{System Model}
	\label{SYSTEM}
	As illustrated in Fig.~{\ref{fig2}}, considering a duration-adaptive highlight pre-caching framework for the vehicular communication network, there are various cloud video content servers, a macro base station (MBS), and $M$ RSUs deployed along road. The set of the RSUs and that of the allocated storage sizes are denoted by $\mathcal{M} = \left\lbrace 1, \cdots, M\right\rbrace $ and $\mathcal{C} = \left\lbrace c_{1}, \cdots, c_{M}\right\rbrace $, respectively. In addition to video caching, the RSU is also responsible for video transmission for users in a vehicle. The downlink transmission rate from RSU $m$ to the user can be expressed as $r_m$. The cloud video servers predict video segments’ popularity of each video based on the information on historical user browsing behaviors. The MBS, equipped with caching, computing, and communication capabilities, serves as a central controller to manage the resources of RSUs by determining how to segment and trim highlights pre-cached to RSUs according to the dynamic service duration.
	\par
	
	\subsection{Video Chunk Popularity Model}
	In the vehicular communication network, the video requested by a user in the vehicle is denoted by $f\in\mathcal{F}=\left\lbrace 1, \cdots, F\right\rbrace $, and the popularity is assumed to obey the Zifp distribution in \cite{b25}. Then the popularity of video $f$ can be expressed as $p_{f}=\left( {1/f^{\beta}}\right) /\left( \sum_{i=1}^{F}1/i^{\beta}\right) $, and $\beta$ is the Zipf exponent. Since segments' composition is variable, to analyze the popularity of segments, the basic unit of segment composition needs to be modeled. The definition of the basic segment composition unit is given in the following.
	\begin{myDef}
		Chunk: A chunk is the smallest divisible video unit watched by users, and contains ${\tau}$ continuous video frames.
	\end{myDef}

	Then, the whole video $f$ can be equally divided into $X_f$ ordered chunks, and the chunk in video $f$ is indexed by $x\in{\cal{X}}_{f}=\{1, \cdots, X_f\}$. The value of video chunks' popularity can be expressed as the ratio of the number of chunks watched or the ratio of the number of bullet-screen comments obtained by users' historical browsing datasets. The video chunk popularity of video $f$ is a time series and is denoted by $y_f(x)$, where $y_f(x) \in \left[ 0, 1 \right], x\in{\cal{X}}_{f}$. A video segment is formed by a sequence of contiguous chunks with the same caching state, indexed by $k \in {\mathcal{K}}_{f}=\left\lbrace 1, \cdots, K_f \right\rbrace$. Video segments' popularity value is the sum of these chunks. \par
	\subsection{Service Duration Model}
	Suppose the vehicle enters RSU $1$ to initiate video service, and exits RSU $M$ to conclude video service, the service duration of the vehicle on road can be represented by the sum of the dwell times in these $M$ RSUs. To model service duration, the dwell time under each RSU needs to be modeled first. The dwell time of the vehicle under RSU $m$ is denoted by $t_m = {d_m}/{v_m}$, where $d_m \in \mathcal{D}$ and $v_{m} \in \mathcal{V}$ denotes the coverage ranges and the average vehicle's speed of RSU the $m$. Then, the service duration of the vehicle on road can be expressed as $T_{d} = \sum_{m=1}^{M}t_{m}$. The transmission traffic of video chunks in each RSU and the number of playout video chunks during the service duration can be given by $Z_{r} = \sum_{m=1}^{M} r_m t_m$ and $Z_p = T_{d}z_t/\tau$, where $z_t$ denotes the number of video chunks within one second.
	\par
	\begin{figure}[t]
		\centering
		\vspace{-0.4cm}
		\includegraphics[width=8.5cm]{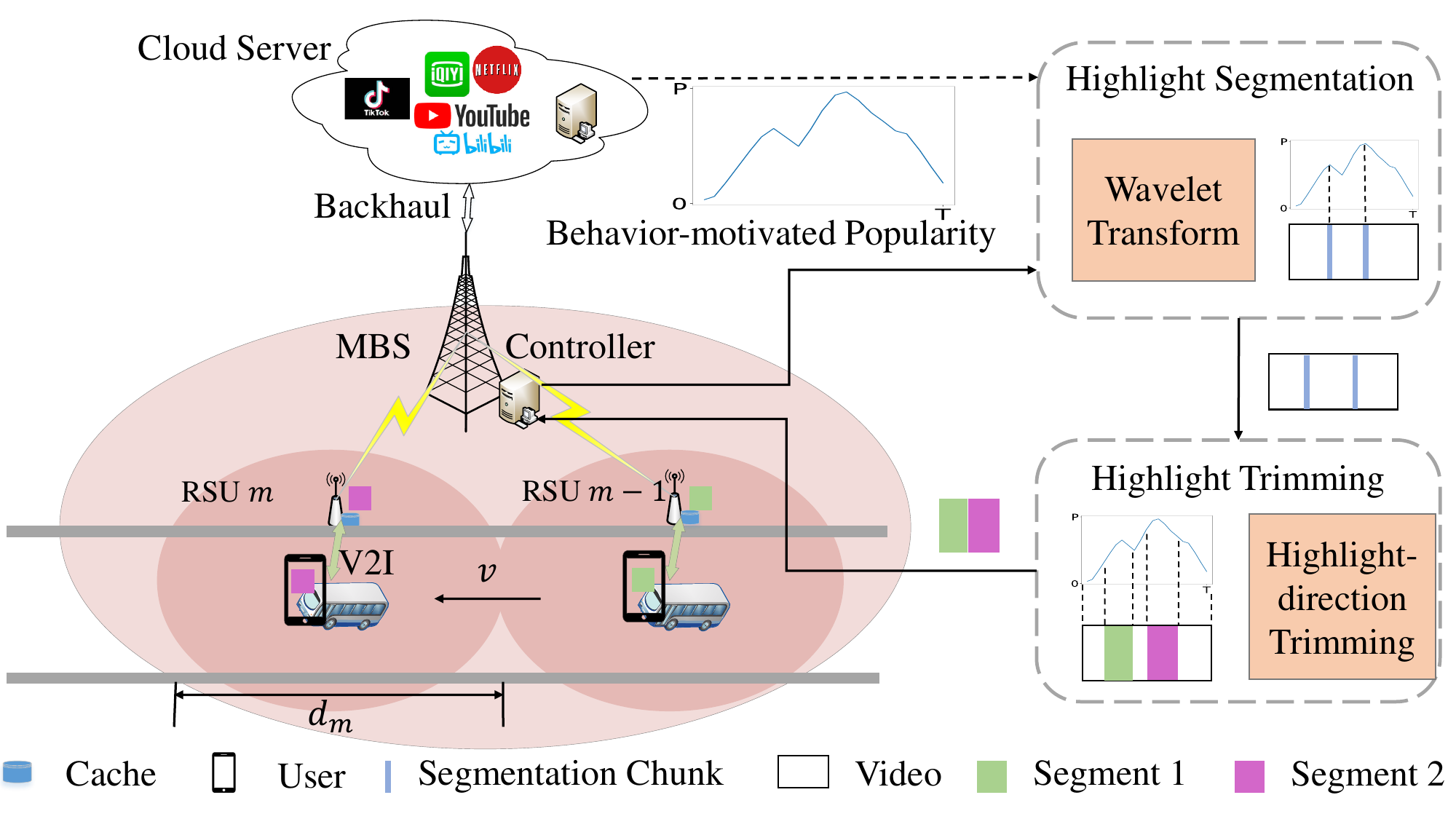}
		\vspace{-0.3cm}
		\caption{The illustration of video pre-caching in RSUs scenarios.}
		\vspace{-0.6cm}
		\label{fig2}
	\end{figure}
	\subsection{Wavelet Transform for Video Chunk Popularity}
	Due to the noise in users' historical browsing datasets, the method of directly sorting the values of video chunks’ popularity may involve many peak values of non-highlight segments, resulting in inaccurate segmentation of video highlight segments. Since varying frequencies of video chunks’ popularity time series in each segment, conventional frequency-domain analysis methods may not be able to accurately locate the chunk where the peak value appears. The continuous wavelet transform (CWT) is the time-frequency analysis method, with the frequency of the video chunks’ popularity value changes being observed step by step from coarse to fine, which can locate the candidate highlight based on the video chunks’ popularity \cite{b26}. As a result, the candidate highlights segmentation can be obtained. The CWT for video chunk popularity of video $f$  is as
	\vspace{-4mm}
	\begin{equation}
		W_{f}(b, s)=\dfrac{1}{\sqrt{|s|}}\int_{-\infty}^{\infty} y_{f}(x)\psi^{\textasteriskcentered}(\dfrac{x-b}{s})dx \label{WT}, 
	\end{equation}
	where $\psi(x)$ is called the wavelet basis function, $s$ and $b$ are scale and shift of the wavelet basis function along the $x$ domain. \par
	\subsection{Highlight Entropy Model}
	To facilitate effective video pre-caching in the scenario where users skip some segments, popularity expectation of users' viewing segments with different pre-caching chunks should be evaluated first. Based on video chunks’ popularity, we define the following evaluation model for highlight pre-caching strategy.
	\begin{myDef}
		\textit{Highlight Entropy (HE).} The average video segments’ popularity expectation for viewing segments after each skipped segment when caching chunks of highlight segments at RSUs.
	\end{myDef}
		The highlight entropy of viewing video $f$ is expressed as
		\begin{equation}\label{HE}	
			E_f  =\dfrac{p_f}{N_f}\sqrt{\dfrac{\sum_{k=1}^{K_f}\left(\theta_{x,k}^s\sum_{x=1}^{X_f}\theta_{x,f}^cy_{f}\left( x\right)\right) ^2}{\sum_{k=1}^{K_f}\sum_{x=1}^{X_f}\left( \left(1- \theta_{x,k}^s\theta_{x,f}^c\right) y_{f}(x)\right)^2} },
		\end{equation}
		where the numerator underneath the square root symbol represents the summation of the squares of the video chunks’ popularity of cached video highlights, and the denominator in the root sign denotes the sum of the squares of the video chunks’ popularity of uncached video chunks. $N_f = \sum_{k=1}^{K_f}\sum_{x=1}^{X_f}\left| \theta_{x,k}^s\theta_{x,f}^c -\theta_{x-1,k-1}^s\theta_{x-1,f}^c \right| $ indicates the number of skipped segments of video $f$. $\theta _{x,f}^c$ is a binary variable that represents the chunk caching decision for video $f$, and $\theta _{x,f}^c = 1$ if chunk $x$ is cached, otherwise, $\theta _{x,f}^c = 0$. The segmentation status of video chunk $x$ can be represented by a binary variable $\theta _{x,k}^s$, where $\theta _{x,k}^s = 1$ means chunk $x$ is in segment $k$. When viewing video segments that encompass a greater number of chunks with elevated popularity, the highlight entropy of the video is higher. \par
		\subsection{Problem Formulation}
		This work aims to maximize the highlight entropy of requested videos by optimizing video files pre-caching, subject to transmission resource, storage resource, and service duration. Accordingly, the optimization problem can be formulated as
		\begin{alignat}{2}
			\vspace{-1mm}
			\label{P0}
			(\rm{P0}): \quad & \begin{array}{*{20}{c}}
				\mathop {\max }\limits_{\theta_{x,f}^c\in\Theta^c, \theta_{x,k}^s\in\Theta^s} \sum_{f=1}^{F}E_f \end{array} & \\ 
			\mbox{s.t.}\quad
			& \sum_{f=1}^{F}\sum_{x=1}^{X_f}\!\theta_{x,f}^c z_f\!\leq \! \min\!\left\lbrace\! Z_r\!, \!\sum_{m=1}^{M} c_{m},\! \dfrac{Z_pF}{\sum_{f=1}^{F}z_f}\!\right\rbrace\!, & \tag{\ref{P0}a}\\
			& \sum_{f=1}^{F}\theta_{x,f}^c \leq 1, \forall x\in{\mathcal{X}}_f, & \tag{\ref{P0}b} \\
			& \theta_{x,f}^c, \theta_{x,k}^s \in \left\lbrace 0,1 \right\rbrace, \forall f\in\mathcal{F}, k\in{\mathcal{K}}_f, x\in{\mathcal{X}}_f, & \tag{\ref{P0}c} \\
			& \sum_{k=1}^{K_f}\theta_{x,k}^s = \theta_{x,f}^c, \forall f \in\mathcal{F}, \forall x\in{\mathcal{X}}_f, & \tag{\ref{P0}d} \\
			& \!\theta_{x,f}^c\! - \!\theta_{x,k}^s \!= \!\theta_{x',f}^c \! -\!\theta_{x',k}^s, \!\forall f\!\in\mathcal{F}, \! k\!\in{\mathcal{K}}_f, \!\left\lbrace \! x'\!, \! x \!\right\rbrace \!\in{\! \mathcal{X}}_f. \! & \tag{\ref{P0}e} 
		\end{alignat}
		In problem (P0), constraint (\ref{P0}a) means that the cached chunk size is limited by service duration, network bandwidth, and the cache storage capacity, where $z_{f}$ represents the chunk size of video $f$. Constraint (\ref{P0}b), (\ref{P0}c), and (\ref{P0}d) denote the binary pre-caching decision constraints, constraint (\ref{P0}e) implies that if video chunk $x$ and $x'$ are in segment $k$, then the corresponding chunk cache status must be the same. 
		\par
		Solving (P0) is challenging due to the following reasons. First, it is non-convex and lacks a closed-form objective function. Second, the chunk caching decision will impact the result of highlight segmentation and trimming, further greatly complicating the problem-solving. Third, the RSU dwelling time is too short, generally in order of minutes, which requires the algorithm to operate efficiently. Therefore, a highlight-direction local optimal algorithm is proposed to reduce the computation	complexity caused by blind segmentation diversity, which makes timely highlight segmentation and trimming decisions as a vehicle travels along the road.\par
		\section{Algorithm Design and Analysis}
		In this section, the highlight entropy maximization problem (P0) by the candidate highlights segmentation is reformed. Then, a highlight-direction trimming algorithm is proposed. Since this problem couples the chunk caching decisions and the segment selections of videos complicating the problem-solving, the wavelet transform is introduced to obtain the initial segmentation, thereby speeding up algorithm convergence.\par
		\subsection{Problem Transformation and Decomposition}
		The optimal solution to (P0) is strongly influenced by video chunks’ popularity, resulting in cached video highlights containing chunks with larger video chunks’ popularity values. In this subsection, video candidate highlight segmentations are obtained by the CWT for video chunk popularity and transform the (P0) into a step-by-step solvable form and derive the optimal direction for the subsequent step.\par
		It can be found that, if $\sum_{x=1}^{X_f}\theta_{x,f}^c=1$, for any given video $f$, the optimal highlight segment $x^*=\arg \mathop {\max }\limits_{x\in{\mathcal{X}}_f } y_f(x)$. Otherwise, if $\sum_{x=1}^{X_f}\theta_{x,f}^c>1$, the optimal highlight segments are highly coupled with the segments' continuity. Although optimal highlight segments do not contain all the chunks with peak values of video chunks’ popularity, these chunks are able to derive the candidate highlights segmentation for the optimization problem (P0).\par
		\begin{lemma}\label{MM}
			When $\sum_{x=1}^{X_f}\theta_{x,f}^c>1$ and $\sum_{x=a}^{b}\theta_{x,k}^s \geq 1, x\in\left[ a, b\right] $, for any given video $f$, $\exists$ $ \theta_{x_m,f}^c=1$, where $x_m = \arg \mathop {\max }\limits_{x\in\left[ a, b \right] } y_f(x)$.
		\end{lemma}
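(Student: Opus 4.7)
The plan is to prove Lemma~\ref{MM} by contradiction via a local exchange argument. Suppose an optimal solution to (P0) satisfies the hypotheses yet has $\theta_{x_m,f}^c = 0$, where $x_m$ achieves the popularity maximum over $[a,b]$. Since $\sum_{x=a}^{b}\theta_{x,k}^s \geq 1$, there exists at least one chunk $x_0 \in [a,b]$ belonging to segment $k$, and constraint (\ref{P0}d) forces $\theta_{x_0,f}^c = 1$. By the definition of $x_m$, one has $y_f(x_0) \leq y_f(x_m)$, so the optimal solution currently caches a lower-popularity chunk in $[a,b]$ while leaving the local peak uncached.

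Next I would construct a perturbed caching decision by swapping $x_0$ for $x_m$: set $\tilde\theta_{x_0,f}^c = 0$ and $\tilde\theta_{x_m,f}^c = 1$, then adjust the segment assignments so that (\ref{P0}c)-(\ref{P0}e) remain satisfied, namely by trimming segment $k$ at $x_0$ and either extending an adjacent cached segment to absorb $x_m$ or instantiating a new singleton segment containing $x_m$. The budget constraint (\ref{P0}a) is preserved because exactly one cached chunk is exchanged for another. Substituting into (\ref{HE}), the numerator under the square root gains $y_f(x_m)^2 - y_f(x_0)^2 \geq 0$ while the denominator, which sums squared popularities of uncached chunks, loses the same nonnegative quantity. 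Hence the ratio inside the root weakly increases, and strictly increases whenever $y_f(x_m) > y_f(x_0)$.

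The main obstacle will be controlling the skipped-segment count $N_f$ in (\ref{HE}) under this re-segmentation, since $N_f$ sits in the denominator of $E_f$ and could in principle be inflated by fragmenting segment $k$. I would handle this by a case analysis on the position of $x_m$ relative to the already-cached chunks: when $x_m$ is adjacent to some cached chunk it can be absorbed into the neighboring segment so that $N_f$ does not change; otherwise I choose $x_0$ to be a boundary chunk of segment $k$, so that trimming $x_0$ neither breaks segment $k$ into two pieces nor grows $N_f$ beyond a single additional singleton, a penalty that is dominated by the strict gain in the sum-of-squares ratio whenever $y_f(x_m) > y_f(x_0)$. In either case the perturbed caching achieves highlight entropy at least as large as the original one, contradicting the assumption $\theta_{x_m,f}^c = 0$ and establishing that any optimal configuration satisfying the hypotheses must cache $x_m$.
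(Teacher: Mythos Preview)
Your exchange-argument strategy matches the paper's approach: both compare caching the local peak $x_m$ against caching some other chunk $x_0\in[a,b]$ and argue that the former yields weakly larger highlight entropy. The paper likewise takes $x_0$ to be a boundary chunk of the segment, just as you propose when controlling $N_f$.

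There is, however, a genuine gap in your bookkeeping for the numerator of (\ref{HE}). That numerator is $\sum_{k}\bigl(\sum_{x}\theta_{x,k}^s\theta_{x,f}^c\,y_f(x)\bigr)^2$, a sum of \emph{segment-sum squares}, not a sum of individual chunk-popularity squares. Consequently, swapping $x_0$ out of segment $k$ and placing $x_m$ into a new singleton segment does \emph{not} change the numerator by $y_f(x_m)^2-y_f(x_0)^2$: segment $k$'s contribution drops from $S_k^2$ to $(S_k-y_f(x_0))^2$, giving a net change of $y_f(x_m)^2+(S_k-y_f(x_0))^2-S_k^2$, whose sign is not automatic. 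The paper sidesteps this by framing the comparison as ``segment $[a,b]$ with one chunk removed'' and directly contrasting $\bigl(\sum_{x=a}^{b}y_f(x)-y_f(x_0)\bigr)^2$ with $\bigl(\sum_{x=a}^{b}y_f(x)-y_f(x_m)\bigr)^2$, which is immediate from $y_f(x_m)\geq y_f(x_0)$; the denominator comparison $(y_f(x_0))^2\leq(y_f(x_m))^2$ then finishes the argument. You should recast your swap in this segment-sum form rather than the per-chunk-square accounting, after which your contradiction closes as intended.
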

		
		\begin{proof}
			Taking a chunk $x_0$ of video $f$ in segment $k$ starting from chunk $a$ and ending with chunk $b$, and where $ x_{m} \neq x_0 \in \left\lbrace a, b\right\rbrace $, then the number of skipped chunks $N_f = X_f-(b-a+1) $ for video $f$ When $\sum_{x=1}^{X_f}\theta_{x,f}^c>1$ and $\sum_{x=a}^{b}\theta_{x,k}^s \geq 1, x\in\left[ a, b\right] $. The sum of video chunks’ popularity for pre-cached segment $k$ square yields
			\begin{equation}
				\sum_{x=a}^{b}y_f(x)-y_f(x_0))^2 \geq (\sum_{x=a}^{b}y_f(x)-y_f(x_m))^2, 
			\end{equation}
			which holds for all $x_{m} \neq x_0 \in \left\lbrace a, b\right\rbrace $. And the sum of skipped video chunks' popularity square satisfies
			\begin{equation}
				\begin{split}
					(y_f(x_0))^2
					\leq (y_f(x_m))^2, 
				\end{split}
			\end{equation}
			which holds for all $x_{m} \neq x_0 \in \left\lbrace a, b\right\rbrace$. Then the highlight entropy obtained by caching chunk $x_{m}$ yields
			\begin{equation}
				\begin{aligned}
					& E_f\left( x_{m}\right) - E_f\left( x_0\right)\\
					&	= \dfrac{p_f}{N_f} \sqrt{I_{f}(x_m)}
					-			\dfrac{p_f}{N_f}\sqrt{I_{f}(x_0)} \geq 0, 
				\end{aligned}
			\end{equation}
			holds for all $k \!\in \!{\mathcal{K}}_f, f \in \mathcal{F}$, where function $I_{f}(x)$ is expressed as $I_{f}(x)= (\sum_{x'=a}^{b}y_f(x')-y_f(x))^2/(\sum_{x'=0}^{a-1}(y_f(x'))^2+\sum_{x'=b+1}^{X_f}(y_f(x'))^2+(y_f(x))^2)$. To maximize the highlight entropy, the chunk with the maximal peak value of video chunks' popularity will be pre-cached. $\hfill\blacksquare$
		\end{proof}
		
		According to Lemma \ref{MM}, chunks with peak values of video chunks' popularity are key initial sets for candidate highlight segmentations. It has been stated in \cite{b26} that the wavelet modulus maxima of a real wavelet are capable of identifying all singular points within a specified interval. As the scale decreases, the lines of maxima converge towards all singular points within the interval, without being limited to any particular ones. Then, the relative maximum of the wavelet modulus satisfies the following requirements at chunk $x_m$ with scale $s$:
		\begin{equation}
			\begin{aligned}
				\! \left| W_{f}(x_m, {s\!+\!1})\right| \! < \! \left| W_{f}(x_m, s)\right| \! > \! \left| W_{f}(x_m, {s\!-\!1})\right| \!, s \! \subseteq \! Z \label{WTMM}.
			\end{aligned}
		\end{equation}
		The initial number of segments $K_f$ and the value of $\Theta^s$ for candidate highlight segmentations of video $f$ are obtained by these chunks through the wavelet transform modulus maxima method. Then, (P0) can be equivalently transformed into
		\begin{alignat}{2}
			\vspace{-1mm}
			\label{P1}
			(\rm{P1}): \quad & \begin{array}{*{20}{c}}
				\mathop {\max }\limits_{\theta_{x,k,f} \in \Theta}E\left(\theta_{x,k,f}\right)
			\end{array} & \\ 
			\mbox{s.t.}\quad
			& \left( \ref{P0}a\right), \left( \ref{P0}b\right), \left( \ref{P0}f\right), \nonumber \\
			& \theta_{x,k,f} = \theta_{x,f}^c\theta_{x,k}^s, \forall x\in {\mathcal{X}}_f, k \in{ \mathcal{K}}_f, f \in\mathcal{F}, & \tag{\ref{P1}a}\\
			& \sum_{f=1}^{F}\theta_{x,k,f} \leq 1, \forall x\in{\mathcal{X}}_f, k \in {\mathcal{K}}_f,f \in\mathcal{F}, & \tag{\ref{P1}b} \\
			& \theta_{x,k,f} = \theta_{x',k,f}, \forall {x, x'} \in {\mathcal{X}}_f, k \in {\mathcal{K}}_f, f \in\mathcal{F}. & \tag{\ref{P1}c}
			\vspace{-3mm}
		\end{alignat}
		
		\subsection{Highlight-Direction Trimming Algorithm}
		In this subsection, a quick local optimal algorithm is presented to solve the highlight trimming iteratively. Algorithm \ref{alg2} illustrates the explicit descriptions of the highlight-direction trimming for adaptive vehicle travel on road. \par
		\begin{prop}\label{direction}
			Given the selected chunk $x_{m}^{t-1}$ of video $f$ at $(t-1)$th iteration, a local optimal video chunk selection at
			the iteration is given by
			\begin{equation}
				x^{t} \in\left\lbrace x_{m}^{t-1}-1, x_{m}^{t-1}+1, x_{m}^{t}\right\rbrace \label{nextsel},
			\end{equation}  
			where $x_{m}^{t}$ is the $t$th iteration highlight segmentation chunk obtained according to Lemma \ref{MM} and equation (\ref{WTMM}) and $x_{m}^{t}\leq x_{m}^{t-1}$ holds for all $x_{m}^{t}, x_{m}^{t-1}\in {\mathcal{X}}_f$.
		\end{prop}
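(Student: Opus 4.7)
The plan is to show that the maximizer of the single-chunk increment $\Delta E_f(\tilde x)=E_f(S^{t-1}\cup\{\tilde x\})-E_f(S^{t-1})$ over all untried chunks $\tilde x$ must lie in the three-element set on the right-hand side of~(\ref{nextsel}), where $S^{t-1}$ denotes the cached set after iteration $t-1$. My starting point is the structural observation that, given $S^{t-1}$, appending a new chunk changes the highlight entropy in~(\ref{HE}) in only two essentially different ways: it either extends the segment currently anchored at $x_m^{t-1}$, which keeps $N_f$ and the skipped-interval decomposition unchanged while enlarging one numerator term, or it opens a fresh segment somewhere else, which increments $N_f$ by one and introduces a new numerator summand. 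This dichotomy collapses the continuous-looking search over ${\cal X}_f$ into a handful of meaningful cases.

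First I would partition the untried chunks into three disjoint classes: (a) $\tilde x\in\{x_m^{t-1}-1,\,x_m^{t-1}+1\}$, which extend the active segment without creating a new skip; (b) $\tilde x=x_m^{t}$, defined via~(\ref{WTMM}) as the highest-popularity wavelet modulus-maximum chunk not yet in $S^{t-1}$, which is by Lemma~\ref{MM} the best anchor for any new segment; and (c) everything else. I would then show that every chunk in class (c) is strictly dominated by an element of (a) or (b). Class (c) splits naturally into chunks that merely extend a non-active segment and interior chunks of a currently uncached interval, and each subcase will be handled by a direct application of Lemma~\ref{MM} to the corresponding maximal uncached interval $[a,b]$.

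The hard part will be the dominance argument for interior class-(c) chunks, because the sqrt-and-ratio form of $E_f$ prevents a naive "larger popularity implies larger entropy" step. For such a $\tilde x$, equation~(\ref{WTMM}) guarantees a strictly larger-popularity modulus-maximum $x'$ inside the same uncached interval, and opening a new segment at either $x'$ or $\tilde x$ increases $N_f$ by exactly one. I would therefore reuse the function $I_f(\cdot)$ introduced in the proof of Lemma~\ref{MM}, write $\Delta E_f(x')-\Delta E_f(\tilde x)$ as $(p_f/N_f)(\sqrt{I_f(x')}-\sqrt{I_f(\tilde x)})$, and verify that the numerator of $I_f$ strictly increases and its denominator strictly decreases when $\tilde x$ is replaced by $x'$. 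Monotonicity of $\sqrt{\cdot}$ then gives $\Delta E_f(x')\ge\Delta E_f(\tilde x)$, and by the selection rule $x'$ is either $x_m^{t}$ itself or, by induction on the popularity-ordered enumeration of maxima, is itself dominated by $x_m^{t}$.

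Finally, the index inequality $x_m^{t}\le x_m^{t-1}$ follows from the enumeration rule that produces the $x_m^{t}$ sequence: the wavelet modulus-maxima chunks identified through~(\ref{WTMM}) are ordered in decreasing popularity, and the trimming algorithm scans ${\cal X}_f$ backward from the segments already committed, so each newly promoted anchor must occupy a strictly smaller chunk index than its predecessor. Combining the class-based dominance conclusions with this scanning convention delivers the three-element candidate set in~(\ref{nextsel}) and completes the local-optimality claim.
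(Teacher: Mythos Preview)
Your plan is considerably more ambitious than the paper's own proof. The paper does not attempt any class-(c) elimination: it simply writes down the single inequality comparing the highlight entropy obtained by extending the active segment at $x_m^{t-1}\pm 1$ against that obtained by opening a new segment at the next wavelet peak $x_m^{t}$, and declares whichever side is larger to be the step-$t$ selection. In other words, the paper treats~(\ref{nextsel}) as a definitional restriction of the search set and only argues which of the three candidates wins, whereas you are trying to prove that the restriction itself loses nothing.

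That stronger claim has a gap in your outline. The subcase of class~(c) you call ``chunks that merely extend a non-active segment'' is not covered by Lemma~\ref{MM}. Lemma~\ref{MM} guarantees that within a fixed uncached interval the peak chunk must be present in any optimal cached segment; it does not compare the marginal entropy gain from extending a \emph{previously} anchored segment (say the one around $x_m^{t-2}$) against extending the currently active one around $x_m^{t-1}$. Both moves keep $N_f$ fixed and each enlarges exactly one numerator summand of~(\ref{HE}), so which wins depends jointly on the boundary popularity values and on the current magnitudes of the two summands; there is no monotonicity argument in Lemma~\ref{MM} or in your $I_f$ comparison that forces the active segment's boundary to dominate. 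Unless you add a separate argument for this subcase (or explicitly assume, as the paper tacitly does, that only the most recently opened segment is ever extended), the three-element candidate set is not established.

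A smaller issue: your justification of $x_m^{t}\le x_m^{t-1}$ conflates popularity ordering with chunk-index ordering. Enumerating wavelet maxima by decreasing popularity yields $y_f(x_m^{t})\le y_f(x_m^{t-1})$, not an inequality on indices, and no backward-scanning convention appears anywhere in the paper. The paper's own proof is silent on this inequality as well, so this is likely a looseness in the statement rather than something you can repair.
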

		
		\begin{proof}
			If the highlight entropy of selecting adjacent chunks $\left\lbrace x_{m}^{t-1}-1, x_{m}^{t-1}+1\right\rbrace $ satisfies
			\begin{equation}
				\begin{split}
					& \dfrac{(y_f(x_{m}^{t-1})+y_f(x_{m}^{t-1}\pm 1))^2}{\sum_{k=1}^{K_f}\sum_{x=1}^{X_f}y_f^2(x)-y_f^2(x_{m}^{t-1})-y_f^2(x_{m}^{t-1}\pm 1)}\\
					& \geq
					\dfrac{y_f^2(x_{m}^{t-1})+y_f^2(x_{m}^{t})}{\sum_{k=1}^{K_f}\sum_{x=1}^{X_f}y_f^2(x)-y_f^2(x_{m}^{t-1})-y_f^2(x_{m}^{t})}, 
				\end{split}
			\end{equation}
			then selecting adjacent chunks to ensure the segment inner continuous according to the restriction (\ref{P1}c) will obtain maximal video highlight entropy at step $t$, otherwise if the first greater than or equal to symbol is false, a chunk $x_{m}^{t}$ with higher video chunks’ popularity value according to Lemma \ref{MM} will be selected to attain maximal highlight entropy.
			$\hfill\blacksquare$
		\end{proof}
		Proposition \ref{direction} implies the selection of chunk caching for the next iteration directed towards obtaining optimal video chunks for highlight trimming. The use of candidate highlight segmentation based on CWT significantly reduces the traversal number in each iteration to optimize video file pre-caching.\par
		The complexity of the highlight-direction trimming algorithm is $O(N_f\left( K_fn\right) )$, in which $N_f, K_f, n$ is the number of videos, chunks, and iterations. The iteration converges quickly with only a few iterations since it starts from the peak value of the video chunks’ popularity.
		\vspace{-3mm}
		\begin{algorithm}[H]
			\caption{Highlight-direction Trimming Algorithm}\label{alg2}
			\begin{algorithmic}[1]
				\Require
				$R, \mathcal{C}, \text{video chunks’ popularity } y(x), \text{vehicle speed } v$
				\State \textbf{Initial:} $y\left( x_{m}\right)$ by CWT according to Eq.(\ref{WT}) and Eq.(\ref{WTMM})
				\While{$\theta_{x,k,f}$ meet the constraints}
				\State Update the next selection set $x^{t}$ by Eq.(\ref{nextsel})
				\State Caculate $\sum_{f=1}^{F}E_f$ according to Eq.(\ref{HE})
				\State Update $\theta_{x,k,f}$ by solving problem (P1)
				\EndWhile
			\end{algorithmic}
		\end{algorithm}
		\vspace{-6mm}
		\section{Simulation Results and Analysis}
		To verify the functionality and feasibility of our proposed pre-caching scheme, the subjective experiment is conducted to analyze QoE, subjective jitter, and highlight level. The performance of the proposed pre-caching scheme will be verified through three criteria in the objective experiment,
		including the highlight entropy, objective jitter, and cache hit ratio. These experiments are based on video datasets in the real world. The baseline algorithms for comparison are briefly described as follows respectively.
		\begin{itemize}[leftmargin=*]
			\item \textbf{\textit{N-Speed Playback (NSP)}} \cite{bNSP}: Skips $N-1$ chunks every time viewing a chunk. This mechanism, a basic feature of most video players, is designed for fast-forwarding users.
			\item \textbf{\textit{Selecting segments with popularity As High As Possible (AHAP)}}: Selects highlights by greedy algorithm for tending-to-skip users \cite{bAHAP}.
			\item \textbf{\textit{Elitism-Based Compact Genetic Algorithms (EGA)}} \cite{b37}: Making sure that the best individuals are not discarded, by transferring them directly into the next generation.
		\end{itemize}
		\subsection{Simulation Setup}
		To analyze the effectiveness of our proposed caching scheme and its impact on QoE, jitter, and highlight level, a week-long experiment is conducted, which involves many participants who were asked to watch nine video clips and rate the clips. The video clips are obtained from three different caching schemes with three different playback durations. \par
		In the objective simulation, a 20 km long freeway with 4 RSUs is considered. The coverage range of each RSU is 5 km. The video chunks’ popularity comes from video frames' popularity according to user browsing behaviors' datasets of video providers such as YouTube, iQIYI, etc \cite{b44}. According to the works of \cite{b1} and taking into consideration the specific context of our work, we set the chunk duration $\tau$ to $10*30$ with a video frame rate of $30 fps$. The parameters for simulations are given in Table \ref{tab1}. \par
		\begin{table}
			\begin{center}
				\vspace{-2mm}
				\caption{Parameters and Values}
				\vspace{-2mm}
				\label{tab1}
				\begin{tabular}{ c  c }
					\hline
					\text{Parameter} & \text{Value} \\
					\hline
					$F, M, \tau$ & $10, 4, 10*30$ \\
					$T_d$ & $30, 45, 60 \text{ min}$\\
					$r_m, z_f, d_m, \mathcal{C}$ & $1.8 \text{ Mbps}, 2.25 \text{ MB}, 5 \text{ km}, 64 \text{ Gb}$ \\
					$v_m$ & $\left\lbrace 120, 60, 40, 30, 24, 20, 17, 15\right\rbrace  \text{ km/h}$ \\
					\hline
				\end{tabular}
				\vspace{-8mm}
			\end{center}
		\end{table}
		
		\subsection{Simulation Results}
		In Fig.~{\ref{subfig}}, the subjective video viewing performance are illustrated respectively under different playback durations (denoted by $T_d$) for all the pre-caching schemes. The QoE is denoted by the average rating according to volunteers. The subjective jitter represents the incoherence of lines, actions, background music, etc. The highlight level indicates whether all the popular segments of a video that volunteers want to watch are included. Specifically, two major insights can be seen in Fig.~{\ref{figure7}}. First, watching video segments obtained using our strategy yields high QoE scores for all playback durations. Then, when the playback duration is short, our strategy for watching video segments achieves much higher QoE compared to others. The main reason is that the candidate highlight segmentation is optimized by highlight entropy considering not only video segments’ popularity but also the continuity of segments within a period of time for viewing experience improvement. As the playback duration approaches the full length of the video, the advantage becomes less prominent, as all the segments have been watched and the highlight level tends to saturate, while the jitter tends to zero. Fig.~{\ref{figure6}} demonstrates that our strategy effectively reduces the jitter of viewing video segments, with a decrease in jitter as the playback duration increases, since the number and duration of playout segments increase. Fig.~{\ref{figure4}} shows that the viewing highlight level of the proposed scheme is no less than that of benchmark schemes as playback duration increases since video highlights are watched.\par
		\begin{figure*}[ht]
			\centering
			\vspace{-2mm}
			\setlength{\abovecaptionskip}{0.cm}
			\subfigure[Subjective QoEs under different playback durations.]
			{	
				\label{figure7}
				\includegraphics[width=5.6cm, height=3.8cm]{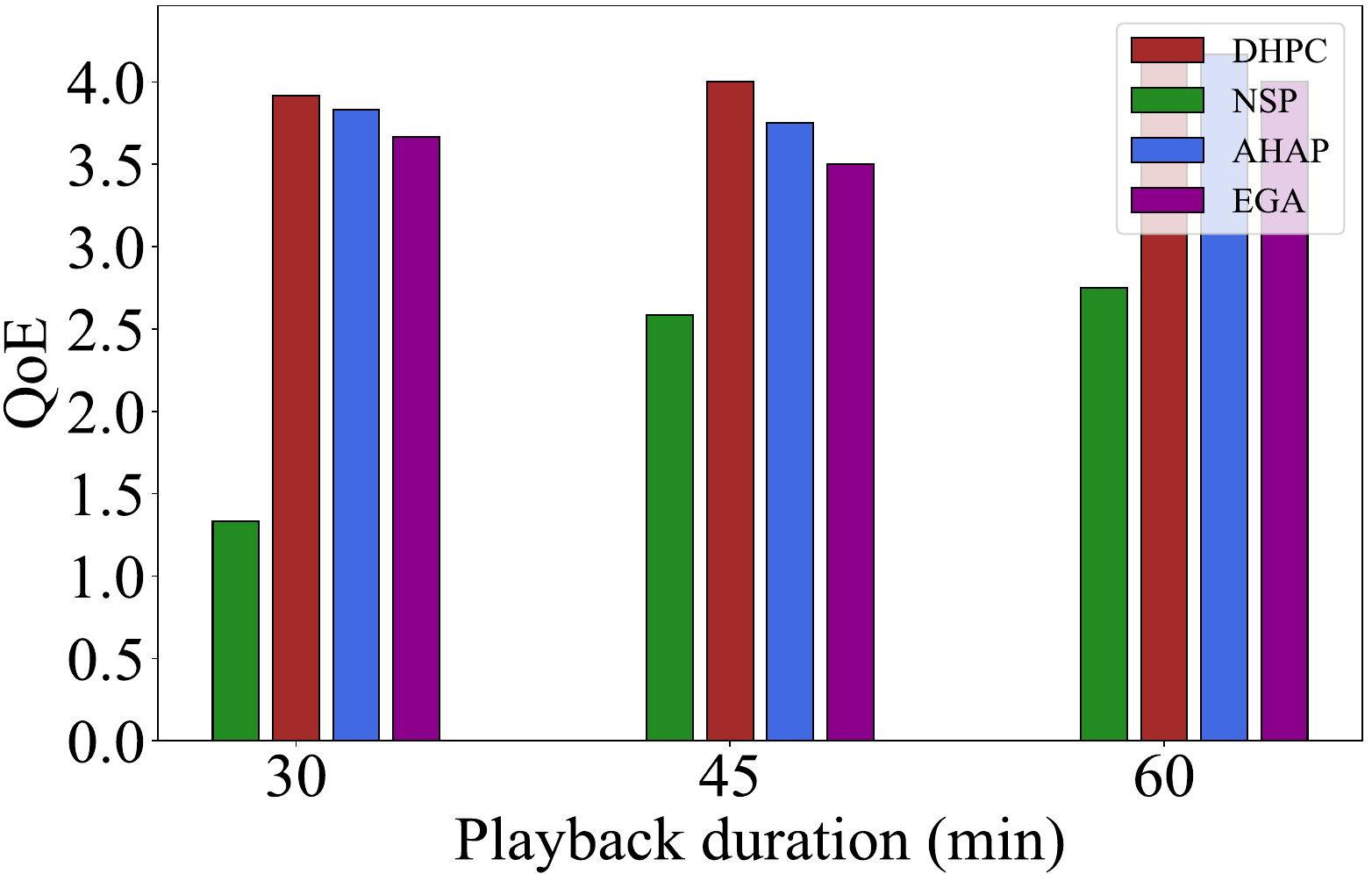}
			}
			\subfigure[Subjective jitter degrees under different playback durations.]
			{	
				\label{figure6}
				\includegraphics[width=5.6cm, height=3.8cm]{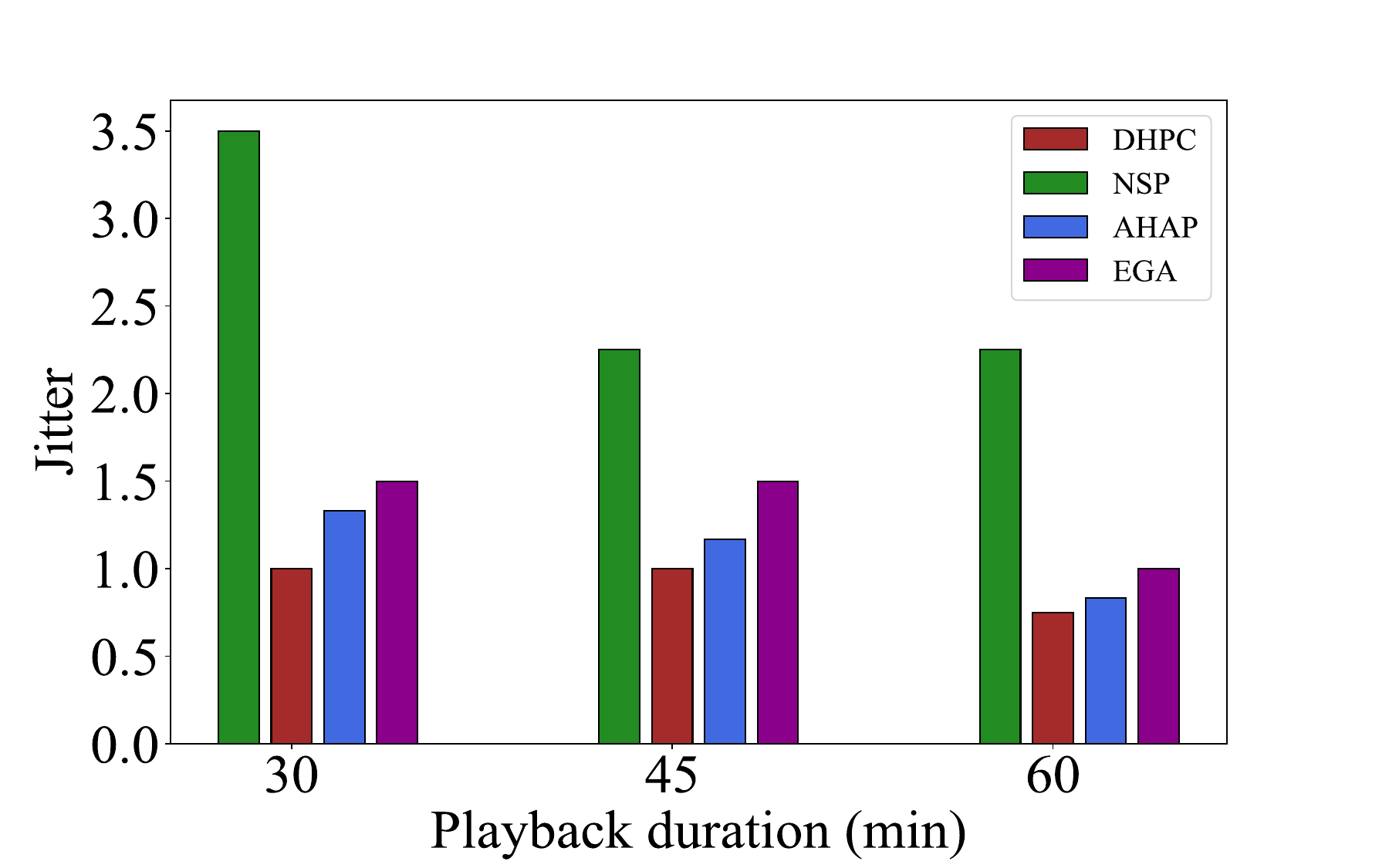}
			}
			\subfigure[Subjective highlight levels under different playback durations.]
			{	
				\label{figure4}
				\includegraphics[width=5.6cm, height=3.8cm]{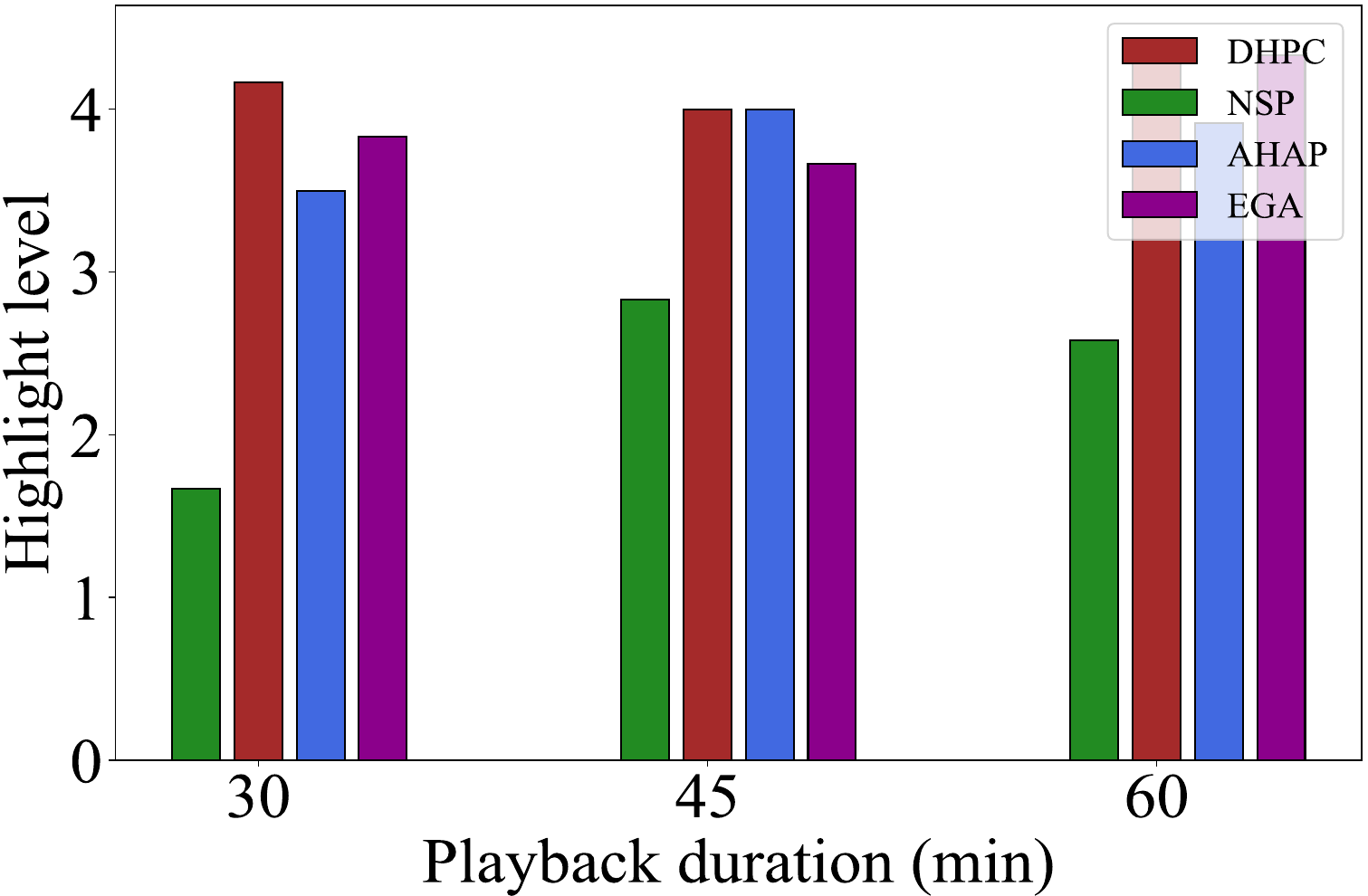}
			}	
			\vspace{-1mm}
			\caption{Achievable subjective evaluation parameters.}
			\vspace{-2mm}
			\label{subfig}
		\end{figure*}
		\begin{figure*}[ht]
			\centering
			\vspace{-3mm}
			\setlength{\abovecaptionskip}{0.cm}
			\subfigure[HEs under different service durations.]
			{
				\label{fig8}
				\includegraphics[width=5.6cm, height=3.8cm]{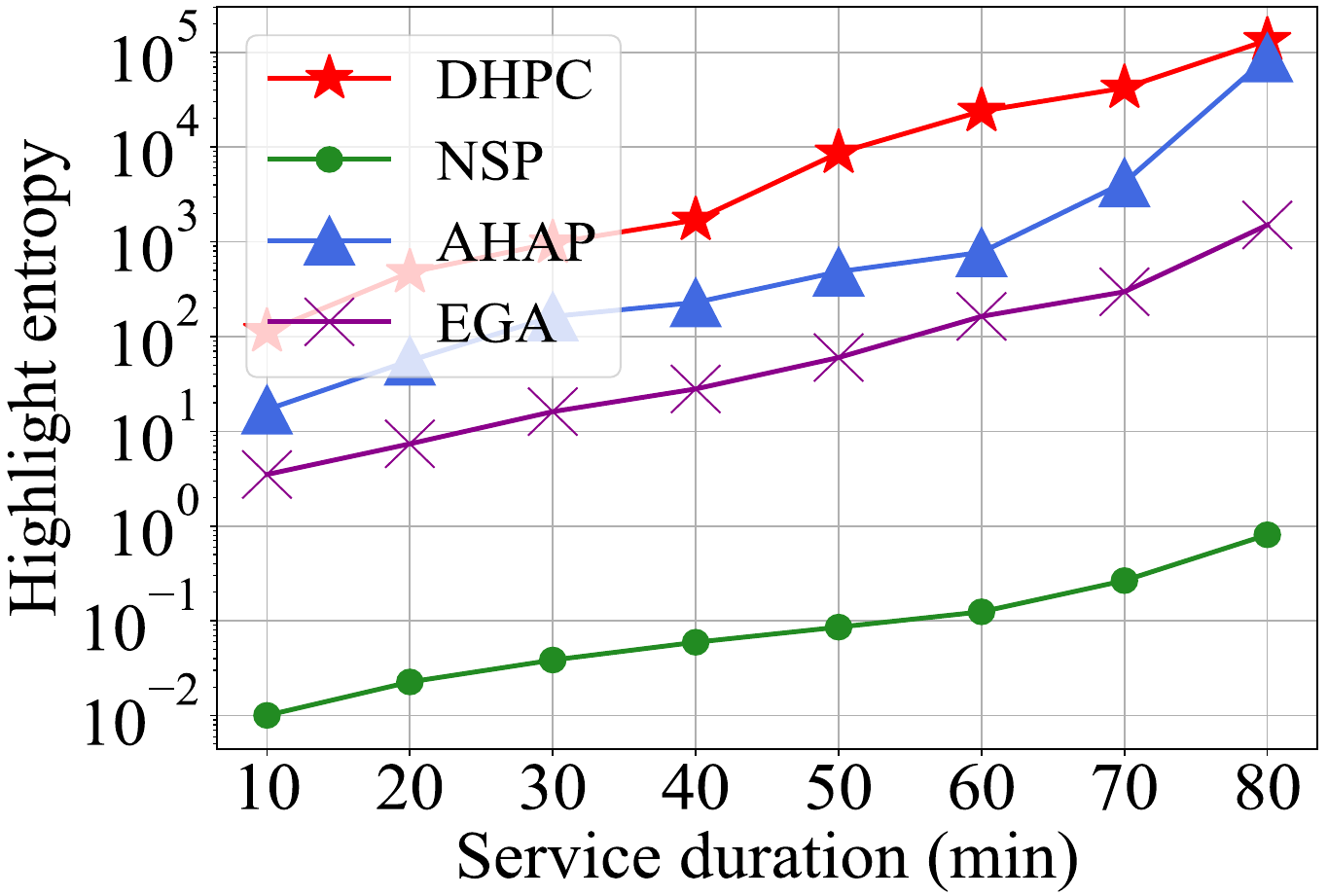}
			}
			\subfigure[Jitters under different service durations.]
			{
				\label{fig9}
				\includegraphics[width=5.6cm, height=3.8cm]{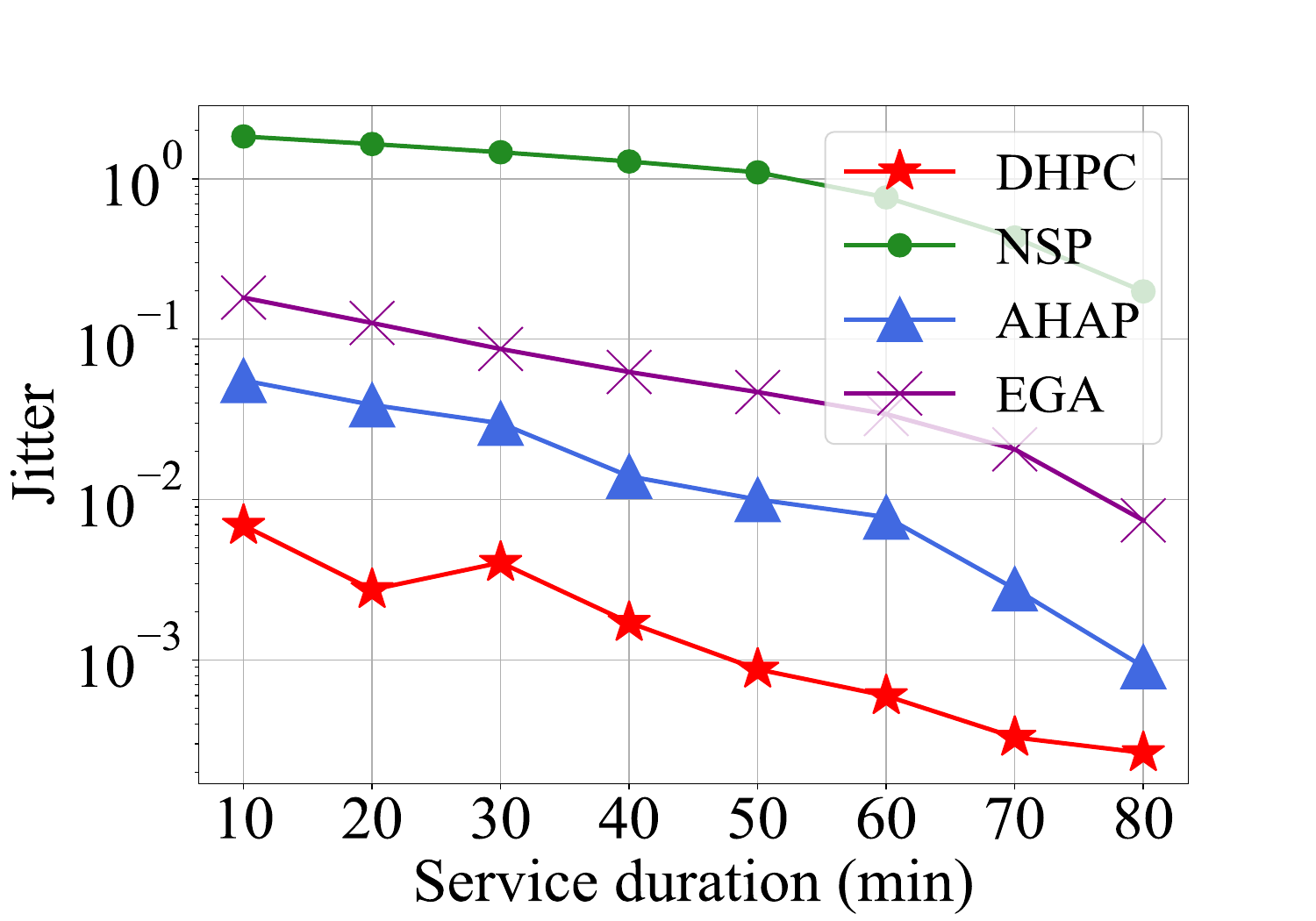}
			}
			\subfigure[Cache hit ratios under different viewing ratios.]
			{
				\label{fig10}
				\includegraphics[width=5.6cm, height=3.8cm]{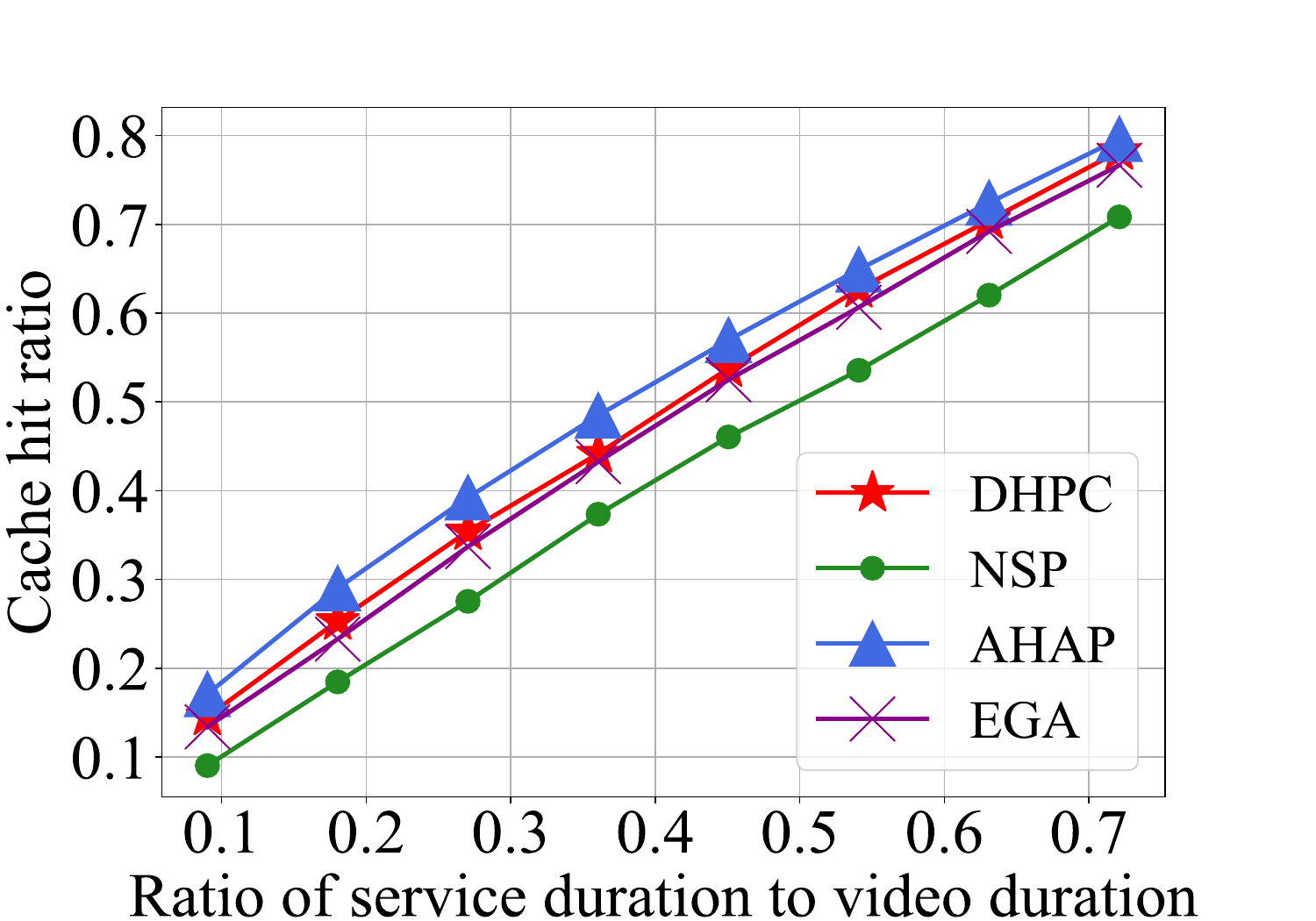}
			}
			\vspace{-1mm}
			\caption{Evaluation parameters of the different pre-caching strategies.}
			\vspace{-6mm}
			\label{obf}
		\end{figure*}
		The objective viewing video performances obtained by different strategies at different service durations are shown in Fig.~{\ref{obf}}. The objective jitter is the ratio of the average interval chunk number between adjacent viewing segments to the total chunk number of a video. The cache hit ratio is the sum of watched segments' popularity. As illustrated in Fig.~{\ref{fig8}}, our proposed method can achieve higher highlight entropy for short service durations. This is because that DHPC pre-caches more consecutive segments with higher video chunks’ popularity. However, as all popular chunks have been selected, there have minimal differences in the highlight entropy between AHAP and DHPC. In Fig.~{\ref{fig9}}, it can be observed a significant reduction in overall jitter as the service duration increases. This is because our strategy aims to reduce the jitter between highlights and enhance continuity during a highlight. In general, our proposed DHPC pre-caches highlight segments with great improvement in the quality of users’ viewing experience. Fig.~{\ref{fig10}} illustrates that our proposed pre-caching scheme outperforms the EGA and NSP methods and drops behind the AHAP in terms of cache hit ratio across different ratios of service duration to video duration. The main reason is that to improve overall viewing experience quality, the proposed scheme, EGA, and NSP fail to maintain higher video segments' popularity while ensuring the continuity of the highlight segments.
		\section{Conclusion and Future Works}
		To enable mobility vehicles to view video with high quality on road, we proposed a duration-adaptive highlight video pre-caching scheme in this paper. To balance the popularity and continuity between segments within a period of time, we constructed a highlight entropy based quality evaluation model and formulated a highlight entropy maximization problem with the limitation of resources and service duration constraints. Based on wavelet transform we proposed a highlight-direction trimming algorithm. Theoretical analysis indicated that the proposed algorithm can achieve local optimum. Subsequently, the algorithm is evaluated on a real dataset, and the simulation results verify that our proposed algorithm improves the video viewing highlight entropy and jitter. 
		For future work, we will focus on the design of dynamic pre-caching with joint optimization of viewing highlight segments and global balance to enhance the system's adaptivity.
		
		\bibliography{references}
		\bibliographystyle{IEEEtran}
		
\end{document}